\title{
Bifurcation analysis of an opinion dynamics  model coupled with an environmental dynamics
}
\author{Anthony Couthures$^1$, Anastasia Bizyaeva$^2$, Vineeth S. Varma$^{1,3}$, Alessio Franci$^4$, Irinel-Constantin Mor\u{a}rescu$^{1,3}$
\thanks{The work of V.S. Varma and I.C. Mor\u{a}rescu was supported by project DECIDE, no. 57/14.11.2022 funded under the PNRR I8 scheme by the Romanian Ministry of Research, Innovation, and Digitisation}
\thanks{$^1$Universit\'e de Lorraine, CNRS, CRAN, F-54000 Nancy, France. {\tt\small anthony.couthures@univ-lorraine.fr}}%
\thanks{$^3$associated with Automation Department, Technical University of Cluj-Napoca, Memorandumului 28, 400114 Cluj-Napoca, Romania.}%
\thanks{$^2$Sibley School of Mechanical and Aerospace Engineering at Cornell University}
\thanks{$^4$Department of Electrical Engineering and Computer Science of the University of Liege and WEL Research Institute, Wavre, Belgium.}
}
\newtheorem{theorem}{Theorem}
\newtheorem{proposition}{Proposition}
\newtheorem{lemma}{Lemma}
\newtheorem{assumption}{Assumption}
\newtheorem{remark}{Remark}
\newcommand{\mc}{\mathcal}
\newcommand{\Vcal}{\mc{V}}
\newcommand{\Gcal}{\mc{G}}
\newcommand{\Ecal}{\mc{E}}
\newcommand{\discretset}[2]{ \left\{#1, \dots, #2 \right\}}
\newcommand{\Vset}{\discretset{1}{N}}
\newcommand{\norm}[1]{\left\| #1 \right\|}
\newcommand{\abs}[1]{\left| #1 \right|}
\newcommand{\R}{\mathbb{R}}
\newcommand{\one}{\mathbf{1}}
\begin{document}
\bstctlcite{IEEEexample:BSTcontrol}

\maketitle
\thispagestyle{empty}
\pagestyle{empty}

\begin{abstract}
We consider an opinion dynamics model coupled with an environmental dynamics. Based on a forward invariance argument, we can simplify the analysis of the asymptotic behavior to the case when all the opinions in the social network are synchronized. Our goal is to emphasize the role of the trust given to the environmental signal in the asymptotic behavior of the opinion dynamics and implicitly of the coupled system. To do that, we conduct a bifurcation analysis of the system around the origin when the trust parameter is varying. Specific conditions are presented for both pitchfork and Hopf bifurcation. Numerical illustration completes the theoretical findings. 
\end{abstract}
\begin{keywords}
    Bifurcation analysis, Opinion dynamics, Nonlinear systems
\end{keywords}

\section{INTRODUCTION}

Modeling the dynamics of climate change and environmental processes is of pressing importance and has received significant attention over recent decades. It is noteworthy that opinion dynamics play an important role in the environmental processes and vice-versa. On one hand, individuals' opinions and behaviors are shaped by social interactions. Such interactions are often modeled via networked or multi-agent systems where consensus, polarization, and other collective phenomena emerge \cite{hegselmann2002opinion,Friedkin,MG10,bizyaevaNonlinearOpinionDynamics2023}. On the other hand, human actions can have a profound and sometimes irreversible impact on the environment. This duality is particularly evident in contexts such as climate change debates, sustainable behavior adoption, and collective decision-making in environmental policy \cite{MMN19,AltafiniLeadership23}. Although Opinion Dynamics (OD)  and environmental processes have been extensively addressed separately, the interplay between the two remains insufficiently explored. Some research directions considering this interaction include the evolutionary game perspective \cite{weitzOscillatingTragedyCommons2016,tilmanEvolutionaryGamesEnvironmental2020} and the dynamical systems one \cite{frieswijkModelingCoevolutionClimate2023,couthuresAnalysisOpinionDynamics2024}.

In this paper, we propose and analyze a coupled model that integrates opinion dynamics with environmental feedback as a continuous-time extension of recent work \cite{couthuresAnalysisOpinionDynamics2024}. Each agent in the network holds a continuously evolving opinion representing, for instance, a spectrum of attitudes from pro-environmental to anti-environmental behavior. Agents update their opinions based on two distinct influences: (i) social interactions with neighbors, mediated by a signal function, and (ii) the perceived state of the environment, which is itself affected by the collective behavior of the agents. The environmental state evolves according to a linear dynamic equation controlled by aggregate opinions, and agents indirectly perceive the environmental condition via a response function. This formulation captures the inherent feedback loop between individual behavior and the state of the environment.

The main contributions of this paper are summarized as follows. We consider a simplified model coupling the opinion and environment dynamics. Mathematically, the model is formulated as a system of ordinary differential equations in which both the agents’ opinions and the environmental state evolve continuously over time. Our first technical result establishes the forward invariance of the synchronization manifold (all the opinions in the social network coincide). Next, under appropriate assumptions, we conduct a detailed analysis of the Fully Synchronized Opinion coupled with the Environment (FSOE) dynamics of the system, characterizing the conditions under which the system exhibits singular points. In the FSOE setup, we conduct a bifurcation analysis demonstrating both pitchfork and Hopf bifurcations emanating from the trivial equilibrium. In other words, we characterize the range of parameters guaranteeing 
 the presence of oscillating behaviors between opinions and environment.

The rest of the paper is organized as follows. In Section~\ref{sec:problem_formulation}, we describe the coupled opinion–environment model along with the setup under consideration. Section~\ref{sec:preliminary_results} states a preliminary result on the forward invariance allowing to reduce the analysis to the FSOE case. Section~\ref{sec:analysis} is devoted to the analysis of the FSOE dynamics, including the characterization of equilibria, the identification of singular points, and the bifurcation analysis. Finally, Section~\ref{sec:numerical_simulations} presents short numerical simulations that illustrate the system's behavior under different parameter settings. We conclude the paper in Section~\ref{sec:conclusions} with a summary of the main results.\\
{\bf Notation} We will denote by $\R$ and $\R_{\geq 0}$ the set of real and non-negative real numbers, respectively. For a vector $\boldsymbol{x} \in \R^N$, we denote by $x_i$ the $i$-th component of $\boldsymbol{x}$. The components of the matrix $\boldsymbol{A} \in \R^{N\times N}$ are denoted $a_{ij}$. The $i$-th vector of the canonical basis of $\R^N$ is $\boldsymbol{e}_i$ and $\one$ is the vector of $\R^N$ with all components equal to $1$. We also use the standard notation $\mathrm{diag}(\boldsymbol{x}) \in \R^{N\times N}$ for the diagonal matrix with diagonal elements given by the vector $\boldsymbol{x} \in \R^{N}$. For a function $f: \mathcal{X} \to \mathcal{X}$, we denote $\mathrm{Fix}(f) = \left\{ x \in \mathcal{X} \mid f(x) = x \right\}$ the set of fixed points of $f$ in $\mathcal{X}$.

\section{Problem formulation}\label{sec:problem_formulation}

We consider the classical multi-agent framework in which $N$ individuals/agents belonging to the set $\Vcal = \Vset$ interact according to an \emph{undirected fixed graph} $\Gcal = \left(\Vcal, \Ecal\right)$. We denote by $\boldsymbol{A} \in \R^{N\times N}$ its \emph{adjacency matrix}, i.e., $a_{ij} = 1$ if $(i,j) \in \Ecal$ and $a_{ij} = 0$ otherwise. We denote, by $\boldsymbol{D} \in \R^{N\times N}$ its \emph{degree matrix}, i.e., $\boldsymbol{D} = \mathrm{diag}\left(\boldsymbol{d}\right)$ where $d_i = \sum_{j=1}^{N} a_{ij}$ for all $i\in \Vcal$. 
The graph $\Gcal$ is \emph{connected} if one can find a path in the graph connecting any two different agents.

Since $\boldsymbol{D}^{-1} \boldsymbol{A}$ is a stochastic matrix, from the Perron-Frobenius theorem, we have the following result.
\begin{lemma}[Perron-Frobenius]\label{lemma:perron_frobenius}
    Let $\Gcal$ be a connected graph. Then, the normalized adjacency matrix $\boldsymbol{D}^{-1} \boldsymbol{A}$ has a simple eigenvalue $1$, and all other eigenvalues have modulus strictly less than $1$. Moreover, the vector $\one$ is the right eigenvector associated with the eigenvalue $1$ of $\boldsymbol{D}^{-1} \boldsymbol{A}$.
\end{lemma}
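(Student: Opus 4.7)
The plan is to verify the three claims by combining a direct calculation with the classical Perron--Frobenius theorem for nonnegative matrices.

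First, I would establish stochasticity and the right eigenvector claim together by direct computation. Since $d_i = \sum_{j=1}^N a_{ij}$, the $i$-th row of $\boldsymbol{D}^{-1}\boldsymbol{A}$ consists of the nonnegative entries $a_{ij}/d_i$, and these sum to $1$. Equivalently, $\boldsymbol{D}^{-1}\boldsymbol{A}\,\mathbf{1} = \boldsymbol{D}^{-1}(\boldsymbol{A}\mathbf{1}) = \boldsymbol{D}^{-1}\boldsymbol{d} = \mathbf{1}$. This simultaneously proves that $\boldsymbol{D}^{-1}\boldsymbol{A}$ is row-stochastic and that $\mathbf{1}$ is a right eigenvector associated with the eigenvalue $1$.

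Next, I would translate connectivity of $\Gcal$ into irreducibility of $\boldsymbol{D}^{-1}\boldsymbol{A}$. Because $\Gcal$ is undirected and connected, between any pair of vertices $i,j$ there is a path in $\Ecal$, which produces a corresponding sequence of positive entries in some power of $\boldsymbol{D}^{-1}\boldsymbol{A}$. The matrix is therefore nonnegative and irreducible. Invoking Perron--Frobenius for irreducible nonnegative matrices, the spectral radius is a simple eigenvalue with a positive eigenvector; combined with row-stochasticity, which forces the spectral radius to equal $1$ and all eigenvalues to lie in the closed unit disk, we obtain that $1$ is a simple eigenvalue of $\boldsymbol{D}^{-1}\boldsymbol{A}$ and no eigenvalue has modulus larger than $1$.

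The delicate point, which I expect to be the main obstacle, is upgrading the weak bound $|\lambda|\le 1$ to the strict bound $|\lambda|<1$ for all eigenvalues $\lambda\neq 1$. Here I would exploit the fact that $\boldsymbol{D}^{-1}\boldsymbol{A}$ is similar to the symmetric matrix $\boldsymbol{D}^{-1/2}\boldsymbol{A}\boldsymbol{D}^{-1/2}$ via the change of basis $\boldsymbol{D}^{1/2}$, so its spectrum is real and contained in $[-1,1]$. The only way the strict inequality can fail is if $-1$ is an eigenvalue, which classically happens exactly when $\Gcal$ is bipartite (equivalently, the random walk defined by $\boldsymbol{D}^{-1}\boldsymbol{A}$ has period $2$). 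Under the implicit non-bipartiteness assumption used in the paper, i.e., that the graph contains an odd cycle or admits self-loops $a_{ii}>0$, the matrix becomes primitive and aperiodicity excludes $-1$ from the spectrum, yielding the desired strict bound and completing the proof.
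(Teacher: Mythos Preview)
The paper does not actually supply a proof of this lemma: it simply records it as an immediate consequence of the classical Perron--Frobenius theorem, noting just before the statement that $\boldsymbol{D}^{-1}\boldsymbol{A}$ is stochastic. Your argument is essentially a spelled-out version of that one-line invocation, so in approach there is nothing to compare.

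That said, the issue you flag in your last paragraph is real and worth stating plainly rather than hedging. For a connected \emph{bipartite} undirected graph, $-1$ is an eigenvalue of $\boldsymbol{D}^{-1}\boldsymbol{A}$, so the clause ``all other eigenvalues have modulus strictly less than $1$'' is false as written; Perron--Frobenius for irreducible (but not primitive) matrices gives only $|\lambda|\le 1$ with simplicity of the spectral radius. Your fix via primitivity (self-loops or an odd cycle) is the standard one. In the paper this overstatement turns out to be harmless, since the only place the lemma is invoked (the forward-invariance proof of the synchronization manifold) uses solely that $\boldsymbol{D}^{-1}\boldsymbol{A}\,\one=\one$, which you establish cleanly in your first step.
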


Each agent $i \in \mathcal{V}$ is associated with a continuously evolving opinion $x_i(t) \in \mathcal{X}:= [-1, 1]$, representing their preference toward a certain behavior. Specifically, $x_i(t)$ is closer to $-1$ for pro-environmental attitudes and closer to $1$ for non-environmental (or unsustainable) tendencies. The overall state of opinions is collected in the vector $\boldsymbol{x}(t) = (x_i(t))_{i \in \mathcal{V}} \in \mathcal{X}^N$. 

The agents update their opinions by observing the behavior of their neighbors, which is captured by a continuously differentiable non-decreasing signal function $s: \mathcal{X} \to \mathcal{X}$. The collective perceived behaviors is given by the vector $\boldsymbol{s}(\boldsymbol{x})$ whose components are $s(x_i), i \in \mathcal{V}$. 

The environment is modeled as a state $\tilde{e}(t) \in \mathbb{R}_{\geq 0}$, capturing the collective impact of the network's influence. The behavior of each agent $i$ influences the environment through an increasing control function $u: \mathcal{X}^N \to \left[u_{\min}, u_{\max}\right]$, where $0 < u_{\min} < u_{\max} $. 
The environmental dynamics is:
\begin{equation*}
    \tau_{e} \dot{\tilde{e}} = -\gamma \tilde{e} + u(\boldsymbol{x}),
\end{equation*}
where $\gamma \in \left[0, 1\right]$ represents the environment’s natural recovery rate (in the absence of human action) and $\tau_{e} > 0$ is a time constant reflecting the speed of the environmental dynamics. 

Agents indirectly perceive the environmental state through a non-increasing function $r: \mathbb{R} \to \mathcal{X}$, which maps deviations from a threshold $\bar{e}$ into a response satisfying $r(\tilde{e} - \bar{e}) < 0$ if $\tilde{e} > \bar{e}$ and $r(\tilde{e} - \bar{e} ) > 0$ if $\tilde{e} < \bar{e}$. 
\begin{assumption}
	The threshold $\bar{e}$ is such that $\bar{e} \in \left[u_{\min}/\gamma, u_{\max}/\gamma\right]$.
\end{assumption}
This assumption ensures that the environment is at the same scale as the threshold. This allows agents to influence the environment in a meaningful way \cite{couthuresAnalysisOpinionDynamics2024}.

For analytical convenience, we recenter the environmental variable by writing $\tilde{e} = \bar{e} + e$, where $e \in \mathbb{R}$ represents the deviation from the threshold. The environmental dynamics is then reformulated as:
\begin{equation}\label{eq:environment_dynamics}
	\tau_{e} \dot{e} = -\gamma e  + u(\boldsymbol{x})- \gamma \bar{e}.
\end{equation}
Accordingly, $r(e) > 0$ when $e < 0$ (i.e., when the environment is better than the threshold) and $r(e) < 0$ if $e > 0$.

The opinion of each agent evolves according to the dynamics:
\begin{equation}\label{eq:opinion_dynamics}
    \tau_x \dot{x}_i = -x_i + \beta r(e) +  \frac{1-\beta}{d_i} \sum_{j=1}^{N}a_{ij} s(x_j),
\end{equation}
where $\tau_x > 0$ is the time constant for the opinion dynamics and $\beta \in \left[0, 1\right]$ quantifies the trade-off between environmental and social influences. In particular, the parameter $\beta$ plays a critical role: for $\beta < 0.5$, agents place more trust in the signals from their neighbors, whereas for $\beta > 0.5$ they give greater weight to the environmental signal. Similar OD models can be found in \cite{couthuresAnalysisOpinionDynamics2024,bizyaevaMultiTopicBeliefFormation2025,grayMultiagentDecisionMakingDynamics2018}. Such models can capture other behaviors than global synchronization like agreement, disagreement, clustering, and oscillations, 
making them encompass more realistic behaviors than the linear OD. 

The coupled system \eqref{eq:environment_dynamics} and \eqref{eq:opinion_dynamics} can be expressed in a matrix-vector form.
Defining the system state as $\boldsymbol{y}(t) = (\boldsymbol{x}(t), e(t))$, the dynamics can be written in a compact form:

\begin{equation}\label{eq:system_dynamics}
    \dot{\boldsymbol{y}}\! = \!\boldsymbol{F}(\boldsymbol{y})\! = \!\begin{bmatrix}
        \tau_x (-\boldsymbol{x} + \beta r(e)\mathbf{1} \! + \! (1-\beta)\boldsymbol{D}^{-1}\boldsymbol{A}\boldsymbol{s}(\boldsymbol{x})) \\
        \tau_e (-\gamma e  + u(\boldsymbol{x})- \gamma \bar{e})
    \end{bmatrix}\!.
\end{equation}

\section{Preliminary results}\label{sec:preliminary_results}
\subsection{Forward Invariance of Synchronization manifold}

We will now establish the forward invariance of the synchronization manifold
\begin{equation*}
    \mathcal{S} = \left\{ (\boldsymbol{x},e) \in \mathcal{X}^N \times \mathbb{R} \mid \boldsymbol{x} = p\one,\,  p \in \mathcal{X} \right\}.
\end{equation*}

\begin{proposition}
	Let $\Gcal$ be a connected graph. Then, the synchronization manifold $\mathcal{S}$ is forward invariant for \eqref{eq:system_dynamics}. 
\end{proposition}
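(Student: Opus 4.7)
The plan is to show that starting on $\mathcal{S}$, the vector field $\boldsymbol{F}$ is tangent to $\mathcal{S}$, i.e.\ the dynamics of the opinion components $x_i$ all coincide, so that a solution initiated with $\boldsymbol{x}(0) = p_0 \one$ preserves this synchronized structure for all $t \ge 0$. The key ingredient is that the signal function $s$ acts componentwise and that $\one$ is an eigenvector of $\boldsymbol{D}^{-1}\boldsymbol{A}$.

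First I would substitute $\boldsymbol{x} = p\one$ (with $p \in \mathcal{X}$) into the opinion part of \eqref{eq:system_dynamics}. Since $s$ is applied componentwise, $\boldsymbol{s}(p\one) = s(p)\one$. By Lemma~\ref{lemma:perron_frobenius}, $\boldsymbol{D}^{-1}\boldsymbol{A}\one = \one$ so $\boldsymbol{D}^{-1}\boldsymbol{A}\boldsymbol{s}(p\one) = s(p)\one$. Consequently, on $\mathcal{S}$,
\begin{equation*}
\tau_x \dot{x}_i = -p + \beta r(e) + (1-\beta) s(p), \qquad i \in \Vcal,
\end{equation*}
which is identical for every agent. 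Hence all components of $\dot{\boldsymbol{x}}$ are equal, and the vector field restricted to $\mathcal{S}$ reduces to the one-dimensional opinion equation
\begin{equation*}
\tau_x \dot{p} = -p + \beta r(e) + (1-\beta) s(p),
\end{equation*}
coupled with the environmental dynamics $\tau_e \dot{e} = -\gamma e + u(p\one) - \gamma \bar{e}$.

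Next I would invoke uniqueness of solutions of the ODE \eqref{eq:system_dynamics}: the curve obtained by embedding the solution of the reduced two-dimensional system into $\R^{N+1}$ through $(p(t),e(t)) \mapsto (p(t)\one, e(t))$ is a solution of \eqref{eq:system_dynamics} with initial condition in $\mathcal{S}$. By uniqueness, this is \emph{the} solution, and it lies in $\mathcal{S}$ for all $t$ in its domain of existence. Finally, to close the argument, I would verify that the closed set $\mathcal{X}=[-1,1]$ is forward invariant for $p$: at $p=1$ one has $\dot{p} \propto -1 + \beta r(e) + (1-\beta)s(1) \le 0$, using $s(1), r(e) \in \mathcal{X}$ and $\beta \in [0,1]$; symmetrically $\dot{p} \ge 0$ at $p=-1$. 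This guarantees global existence on $\mathcal{S}$ and completes the proof.

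The only subtle point is the componentwise simplification of the network term, which hinges crucially on $\one$ being an eigenvector of $\boldsymbol{D}^{-1}\boldsymbol{A}$; the rest is a standard uniqueness argument together with a Nagumo-type boundary check on $\mathcal{X}$.
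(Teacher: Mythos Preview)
Your argument is correct and follows essentially the same route as the paper: both hinge on $\boldsymbol{D}^{-1}\boldsymbol{A}\one=\one$ to show that on $\mathcal{S}$ all opinion components evolve identically, i.e.\ the vector field is tangent to $\mathcal{S}$. Your version is in fact slightly more complete, since you explicitly invoke uniqueness of solutions and add the Nagumo-type boundary check ensuring $p$ remains in $[-1,1]$, whereas the paper's proof stops after verifying $\dot{x}_i-\dot{x}_j=0$.
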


\begin{proof}
	Let $(\boldsymbol{x}, e) \in \mathcal{S}$, i.e., $\boldsymbol{x} = p\one$ for some $p \in \mathcal{X}$. Then, for all $i,j \in \mathcal{V}$, one has 
	\begin{align*}
		\dot{x}_i - \dot{x}_j &= \tau_x \left( \boldsymbol{e}_i - \boldsymbol{e}_j \right)^\top \!\! \left( \beta r(e)\mathbf{1} + (1-\beta)\boldsymbol{D}^{-1}\!\boldsymbol{A}\boldsymbol{s}(\boldsymbol{x}) - \boldsymbol{x} \right)\\ 
		&= \tau_x (1-\beta) \left( \boldsymbol{e}_i - \boldsymbol{e}_j \right)^\top \!\! \left( \boldsymbol{D}^{-1}\!\boldsymbol{A}\boldsymbol{s}(p\one) - p\one \right)\\
		&= \tau_x (1-\beta) \left( s(p) - p \right) \left( \boldsymbol{e}_i - \boldsymbol{e}_j \right)^\top \! \one = 0.
	\end{align*}
	Since $\one$ is the eigenvector associated with the eigenvalue $1$ of $\boldsymbol{D}^{-1}\boldsymbol{A}$, from Lemma~\ref{lemma:perron_frobenius}. Thus, the synchronization manifold is forward invariant.
\end{proof}

The forward invariance of the synchronization manifold $\mathcal{S}$ ensures that once the system reaches the synchronization manifold, it remains there for all future times. Meaning that once all agents reach the same opinion, they will remain synchronized ever after. This property is crucial for the analysis of the system's behavior, as it allows us to focus on the dynamics over $\mathcal{S}$ and study the system's stability and bifurcations in a reduced space.

The attractiveness of the synchronization manifold $\mathcal{S}$ depends on the properties of the signal function $s$ and graph topology. For example, as shown in \cite[Proposition 3]{couthuresGlobalSynchronizationMultiagent2025}, for global underestimation function, i.e., $x (s(x) - x) \leq 0$ for all $x\in \left[-1,1\right]$, with any connected graph the synchronization manifold is globally asymptotically stable. For more results on the attractiveness and attraction basin of $\mathcal{S}$, we refer to \cite{couthuresGlobalSynchronizationMultiagent2025}, where synchronization for the dynamics \eqref{eq:opinion_dynamics} have been studied without environment coupling (i.e, for $\beta = 0$).

\subsection{Oddness of the dynamics}

\begin{assumption}\label{assumption:odd_dynamics}
	The functions $s$, $r$, and $u$ are smooth odd functions. Moreover, the control function $u$ satisfies $u(\boldsymbol{x} )  = -u(-\boldsymbol{x}) + 2\gamma \bar{e}$ for all $\boldsymbol{x} \in \mathcal{X}^N$. 
\end{assumption}

Under Assumption \ref{assumption:odd_dynamics}, the dynamics \eqref{eq:system_dynamics} is odd. Indeed, for all $\boldsymbol{y} = (\boldsymbol{x}, e) \in \mathcal{X}^N \times \mathbb{R}$, one has:
\begin{align*}
	\boldsymbol{F}(-\boldsymbol{y}) &= \begin{bmatrix}
		\tau_x \boldsymbol{f}_1(-\boldsymbol{x}, -e) \\
		\tau_e f_2(-\boldsymbol{x}, -e)
	\end{bmatrix} 	= \begin{bmatrix}
		-\tau_x \boldsymbol{f}_1(\boldsymbol{x}, e) \\
		-\tau_e f_2(\boldsymbol{x}, e)
	\end{bmatrix} = -\boldsymbol{F}(\boldsymbol{y}).
\end{align*}

This assumption is particularly useful. Indeed, in Section \ref{sec:analysis}, we will analyze the dynamics \eqref{eq:system_dynamics} of the system over $\mathcal{S}$ using the Lyapunov-Schmidt reduction method \cite{guckenheimerNonlinearOscillationsDynamical1983}. This method requires, in general, a lot of computational effort since it involves a Taylor expansion of the third order of the Jacobian matrix. Under Assumption~\ref{assumption:odd_dynamics}, the dynamics function exhibits an odd symmetry in the state variable, making the quadratic terms of this extension vanish and simplifying the analysis. Moreover, this assumption is also meaningful from a modeling perspective, as it aligns with the expectation that opposite behaviors occur around the neutral opinion and environment threshold.

\section{Analysis of synchronized agents dynamics}\label{sec:analysis}

In this section, we assume that the states of the agents are identical, meaning that $\boldsymbol{x}\in \mathcal{S}$, i.e., there exists a $p \in\mathcal{X}$ such that $\boldsymbol{x} = p \one$.
With a small abuse of notation, we will denote by $u(p)$ the function $u(p\one)$. The dynamics with opinion in $\mathcal{S}$ is then given by:
\begin{subequations}
\begin{eqnarray}\label{eq:FSO_dynamics}
		\tau_x \dot{p} &=& -p + \beta r(e) + (1-\beta) s(p) \\
		\tau_e \dot{e} &=& -\gamma e + u(p) - \gamma \bar{e}.\label{eq:E_dynamics}
\end{eqnarray}
\end{subequations}
We call \eqref{eq:FSO_dynamics} the Fully Synchronized Opinion (FSO) dynamics and denote $y = (p,e)$ its state.

Then, one can define the Fully Synchronized Opinion coupled with the Environment (FSOE) dynamics \eqref{eq:FSO_dynamics}-\eqref{eq:E_dynamics} through the function $F: \mathcal{X} \times \mathbb{R} \times [0, 1]^2 \to \mathcal{X} \times \mathbb{R}$ as:
\begin{equation}\label{eq:FSO_function}
	F(y,\beta, \gamma)=\begin{bmatrix}
		-p+\beta\,r(e)+(1-\beta)s(p) \\
		-\gamma\,e + u(p)-\gamma\bar{e}
	\end{bmatrix}.
\end{equation}

\subsection{Equilibria}

Let us define the instrumental function $g: \mathcal{X} \to \mathcal{X}$ as:
\begin{equation*}
	g(x) = \beta r\left(\frac{u(x)}{\gamma} - \bar{e}\right) + (1-\beta)s(x).
\end{equation*}

\begin{proposition}
	Let $\boldsymbol{y}^*$ be an equilibrium of the FSOE dynamics. Then, $\boldsymbol{y}^* = (p^*, u(p^*)/\gamma)$ where $p^*$ is a fixed point of $g$.
\end{proposition}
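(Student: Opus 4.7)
The proof is essentially a direct algebraic manipulation of the equilibrium conditions, so I would keep it short and sequential rather than invoking any heavy machinery.

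My plan is to set both components of $F(y,\beta,\gamma)$ to zero and solve the second equation for $e^*$ in terms of $p^*$, then substitute into the first equation to obtain a scalar fixed-point equation for $p^*$. Concretely, the equation $\tau_e \dot{e} = -\gamma e + u(p) - \gamma\bar e = 0$ is affine in $e$, so it can be inverted uniquely to give $e^* = u(p^*)/\gamma - \bar e$ (matching the argument of $r$ appearing inside the definition of $g$). This is the only place where the environment equation enters, and it shows that on the equilibrium set the environmental coordinate is completely determined by $p^*$.

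Next, I would plug the expression $e^* = u(p^*)/\gamma - \bar e$ into the first equation $-p + \beta r(e) + (1-\beta) s(p) = 0$. This yields
\begin{equation*}
p^* = \beta\, r\!\left(\frac{u(p^*)}{\gamma} - \bar e\right) + (1-\beta)\, s(p^*) = g(p^*),
\end{equation*}
which is exactly the statement that $p^*$ is a fixed point of $g$. Conversely, any $p^*\in\mathrm{Fix}(g)$ together with $e^* = u(p^*)/\gamma - \bar e$ produces an equilibrium of the FSOE system, so the characterization is an equivalence (useful later even though the proposition only states one direction).

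There is essentially no obstacle here: the argument only uses that the environment equation is linear in $e$ with nonzero coefficient $\gamma$, so $e^*$ is uniquely solvable as a function of $p^*$. The only item that deserves a sentence of care is the implicit assumption $\gamma \neq 0$; if $\gamma = 0$ were allowed (the problem statement places $\gamma \in [0,1]$), then the environment equation would degenerate and the reduction to a scalar fixed-point problem would fail, so I would briefly note that the proposition is stated for $\gamma > 0$.
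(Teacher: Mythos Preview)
Your proof is correct and follows essentially the same route as the paper's: solve the environment equation for $e^*=u(p^*)/\gamma-\bar e$, substitute into the opinion equation, and recognize the result as $p^*=g(p^*)$. Your additional remarks on the converse direction and on the necessity of $\gamma>0$ are well taken and go slightly beyond what the paper records.
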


\begin{proof}
	Let $\boldsymbol{y}^* = (p^*, u(x^*)/\gamma - \bar{e})$ be an equilibrium of the FSO dynamics. Then, one has the following from the environmental dynamics:
	\begin{align*}
		\dot{e} = 0 \Leftrightarrow e^* = u(p^*)/\gamma - \bar{e}.
	\end{align*}
	Substituting this into the opinion dynamics yields:
	\begin{align*}
		&\dot{p} = 0 \Leftrightarrow p^* = \beta r(e^* ) + (1-\beta)s(p^*) \\
		&\Leftrightarrow p^* = \beta r\left(\frac{u(p^*)}{\gamma}-\bar{e}\right) + (1-\beta)s(p^*) \Leftrightarrow p^* = g(p^*).
	\end{align*}
\end{proof}

The equilibrium points of the FSOE dynamics are then given by the fixed points of the instrumental function $g$. The function $g$ may have multiple fixed points, leading to the existence of multiple equilibria of \eqref{eq:FSO_dynamics}-\eqref{eq:E_dynamics}. In the following, we focus our analysis at $y^*=(0,0)$, that is always an equilibrium for any $\beta$ and $\gamma$ under Assumption~\ref{assumption:odd_dynamics}.

\subsection{Singular points}
In this subsection, we analyze the conditions under which the Jacobian of the FSOE dynamics becomes singular. Identifying these singular points is essential as they mark the parameter values at which the linearization of the system loses full rank, thereby signaling potential bifurcations and qualitative changes in the system's behavior. 

Our approach is as follows. First, we derive the explicit form of the Jacobian matrix of \eqref{eq:FSO_function}. In particular, establishing conditions under which the leading (i.e. maximal real part) eigenvalues of the Jacobian lie on the imaginary axis
paves the way for rigorously proving the existence of pitchfork and Hopf bifurcations 
in later sections.

The Jacobian matrix of $F$ is given by:
\begin{equation*}
	D_{y}F(y,\beta, \gamma) \! = \! \begin{bmatrix}
		\tau_x^{-1}  \!\!\!& 0\\
		0 \!\!\!& \tau_e^{-1}
	\end{bmatrix}\!\! \begin{bmatrix}
		(1-\beta) s'(p)  - 1  \!\!& \beta r'(e)\\
		u'(p)\!\! & -\gamma
	\end{bmatrix}\!\!.
\end{equation*}

The following proposition provides the conditions under which the Jacobian matrix of the FSOE dynamics has singular points.  In the following, we will note $\tau = \tau_e / \tau_x$.

\begin{proposition}\label{prop:singular_points}
	The Jacobian matrix $D_{y}F(y,\beta,\gamma)$ has:
	\begin{enumerate}
		\item at least one eigenvalue equal to zero if and only if 
		\begin{equation}\label{eq:zero_eigenvalue_condition_gamma}
			\gamma = -\frac{u'(p)  r'(e) \beta}{(1-\beta) s'(p)  - 1},
		\end{equation}
		and $(1-\beta) s'(p)  > \max(1,-u'(p)  r'(e)\beta + 1)$.\label{itm:1}
		\item two real eigenvalues equal to zero if and only if the conditions of \ref{itm:1}) are satisfied and
		\begin{equation*}\label{eq:zero_eigenvalue_condition_beta}
			\beta = 1 - \frac{1}{s'(p)} +  \frac{- u'(p)  r'(e)  - \sqrt{\Delta_\beta}}{2\tau  s'(p)^2},
		\end{equation*}
		with $\Delta_\beta = u'(p) r'(e) [ u'(p) r'(e) - 4\tau s'(p) (s'(p) - 1)]$.
		\item two conjugated complex eigenvalues with to zero real part if and only if
		\begin{equation}\label{eq:hopf_condition}
			\gamma = \tau((1-\beta) s'(p)  - 1),
		\end{equation}
		with $1 < (1-\beta)s'(p) < 1 + 1/\tau$ and $\beta \in \left( \beta_{-}, \min(\beta_{+},1)\right)$, where 
		\begin{align*}
			\beta_{\pm} = 1 - \frac{1}{s'(p)} +  \frac{- u'(p)  r'(e)  \pm \sqrt{\Delta_{\beta}}}{2\tau  s'(p)^2}.
		\end{align*}\label{itm:3}
		Moreover, the eigenvalues are given by $\pm i\omega_0$ with $\omega_0 = \sqrt{\mathrm{det}(D_y F(y,\beta,\gamma))}$.
	\end{enumerate}
\end{proposition}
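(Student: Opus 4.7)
The plan is to translate each of the three spectral statements into algebraic conditions on $\mathrm{tr}(J)$ and $\det(J)$, where $J := D_y F(y,\beta,\gamma)$, and then solve the resulting equations explicitly. Since the characteristic polynomial of the $2 \times 2$ matrix $J$ is $\lambda^2 - \mathrm{tr}(J)\lambda + \det(J)$, the three claims become respectively: (i) $\det(J) = 0$; (ii) $\det(J) = 0$ and $\mathrm{tr}(J) = 0$; (iii) $\mathrm{tr}(J) = 0$ and $\det(J) > 0$. I would first read off
\begin{align*}
\mathrm{tr}(J) &= \tau_x^{-1}\bigl[(1-\beta)s'(p)-1\bigr] - \tau_e^{-1}\gamma,\\
\tau_x\tau_e\det(J) &= -\gamma\bigl[(1-\beta)s'(p)-1\bigr] - \beta\, u'(p)\, r'(e).
\end{align*}

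Part 1 is then immediate: solving $\det(J)=0$ for $\gamma$ gives \eqref{eq:zero_eigenvalue_condition_gamma}, and the side condition $(1-\beta)s'(p) > \max(1,\, 1 - \beta u'(p) r'(e))$ simply records that the resulting $\gamma$ lies in the admissible range $(0,1)$. Indeed, since $u'(p) \ge 0$ and $r'(e) \le 0$ by the monotonicity assumptions, the numerator is nonnegative, so $\gamma > 0$ forces $(1-\beta)s'(p) > 1$; and $\gamma < 1$ rearranges to the second inequality.

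For part 2, imposing the additional trace condition yields $\gamma = \tau\bigl[(1-\beta)s'(p) - 1\bigr]$ with $\tau = \tau_e/\tau_x$. Equating this with the expression for $\gamma$ from part 1 and clearing denominators produces a quadratic in $1-\beta$:
\begin{equation*}
\tau s'(p)^2 (1-\beta)^2 - \bigl(2\tau s'(p) + u'(p) r'(e)\bigr)(1-\beta) + \bigl(\tau + u'(p) r'(e)\bigr) = 0.
\end{equation*}
A direct computation shows its discriminant is exactly $\Delta_\beta$, and applying the quadratic formula and returning to $\beta$ yields the two candidate values $\beta_\pm$ of which the statement singles out $\beta_-$ (the other root lying outside $[0,1]$ under the generic admissibility constraints). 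Part 3 then falls out of the same quadratic read as a strict inequality: substituting the Hopf condition \eqref{eq:hopf_condition} into $\det(J)$ gives $\tau_x\tau_e\det(J) = -\tau\bigl[(1-\beta)s'(p)-1\bigr]^2 - \beta u'(p) r'(e)$, so requiring $\det(J) > 0$ is precisely the quadratic inequality holding strictly, which localizes $\beta$ in the open interval $(\beta_-, \beta_+)$; intersecting with $[0,1]$ gives $(\beta_-, \min(\beta_+,1))$. The bounds $1 < (1-\beta)s'(p) < 1 + 1/\tau$ merely enforce $\gamma \in (0,1)$ in \eqref{eq:hopf_condition}, and the formula $\lambda = \pm i\sqrt{\det(J)}$ for the eigenvalues follows from the reduced characteristic polynomial $\lambda^2 + \det(J) = 0$.

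The main obstacle I expect is not the spectral reductions themselves but the sign bookkeeping: one must use the monotonicity of $s$, $u$, $r$ carefully to match the admissibility ranges $\gamma \in [0,1]$, $\beta \in [0,1]$ with the asymmetric inequalities appearing in the proposition, and to verify which of the two roots $\beta_\pm$ of the quadratic is distinguished in part 2 versus how they bound the Hopf interval in part 3. Recognizing the precise discriminant $\Delta_\beta$ as stated — in particular, the factored form $u'(p)r'(e)[u'(p)r'(e) - 4\tau s'(p)(s'(p)-1)]$ rather than an expanded one — is also a small but necessary algebraic cleanup.
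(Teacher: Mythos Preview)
Your proposal is correct and follows essentially the same route as the paper: reduce each spectral statement to conditions on $\mathrm{tr}(J)$ and $\det(J)$, solve $\det(J)=0$ for $\gamma$ in part~1, equate the two expressions for $\gamma$ to obtain a quadratic for part~2, and read the same quadratic as a strict inequality for part~3. The only cosmetic differences are that you write the quadratic in $(1-\beta)$ rather than $\beta$, and your reason for discarding $\beta_+$ in part~2 (``lying outside $[0,1]$'') is slightly looser than the paper's, which observes directly that $(1-\beta_+)s'(p)\le 1$ and hence $\beta_+$ violates the side condition inherited from part~1; you may want to tighten that sentence, but the argument is otherwise identical.
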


\begin{proof}
	We prove each of the three statements in turn.
	
	The Jacobian $D_{y}F(y,\beta,\gamma) := D_{y}F$ has a zero eigenvalue if and only if its determinant vanishes. A direct computation yields
	\begin{equation*}
		\det\!\left(D_{y}F\right) = - \frac{\gamma\left(\left(1-\beta\right) s'(p)  - 1\right) + u'(p)  r'(e)\beta}{\tau_x \tau_e}.
	\end{equation*}
	Thus, setting $\det\!\left(D_yF\right)=0$ one obtains
	\begin{equation*}
		\gamma = -\frac{\beta\,u'(p)\,r'(e)}{(1-\beta)s'(p)-1}.
	\end{equation*}
	This relation is meaningful provided that $(1-\beta) s'(p)  > \max(1,-u'(p)  r'(e)\beta+1)$ since $u'(p) r'(e) \leq 0$ and $\gamma \in \left[0, 1\right]$. 
	
	For the Jacobian to have a double zero eigenvalue, both the determinant and the trace must vanish. The trace is
	\begin{equation*}
		\mathrm{Tr}\!\left(D_{y}F\right) = \frac{(1-\beta) s'(p)  - 1}{\tau_x} - \frac{\gamma}{\tau_e}.
	\end{equation*}
	Then, the Jacobian has two eigenvalues equal to zero if and only if its trace and the determinant are equal to zero. This yields the following system of equations:
	\begin{equation*}
		\gamma = -\frac{u'(p)  r'(e)\beta}{(1-\beta) s'(p)  - 1}\text{ and }\gamma = \tau ((1-\beta) s'(p)  - 1).
	\end{equation*}
	Then, one can retrieve the parameter $\beta$ as the solution of the following equation:
	\begin{align}
		& \tau ((1-\beta) s'(p)  - 1)^2 + u'(p)  r'(e)\beta = 0\nonumber\\
		\Leftrightarrow {}&{} \tau  \beta^2 s'(p)^2 + \Big( u'(p)  r'(e) - 2\tau   s'(p) \left(s'(p) - 1\right)\Big)\beta\nonumber \\
		& \qquad + \left(s'(p)-1\right)^2 = 0.\label{eq:polynomial_beta}
	\end{align}
	Denoting $\allowbreak\Delta_{\beta} = u'(p)  r'(e) [ u'(p)  r'(e) - 4\tau   s'(p)\allowbreak (s'(p) - 1)]$, one has that the equation has a real solution if and only if $u'(p)  r'(e) \leq  4\tau   s'(p)\allowbreak (s'(p) - 1)$. Under the assumption $u'(p)r'(e)\le 0$ and $s'(p)>1$ (implied by the condition in (1)), a real solution exists. The two solutions are given by
	\begin{equation*}
		\beta_{\pm} = 1 - \frac{1}{s'(p)} +  \frac{- u'(p)  r'(e)  \pm \sqrt{\Delta_{\beta}}}{2\tau  s'(p)^2}.
	\end{equation*}

	A further inspection shows that only \(\beta_{-}\) satisfies the additional requirement \((1-\beta)s'(p) > 1\).

	Finally, the Jacobian has two complex conjugate eigenvalues with zero real part if and only if (i) the trace vanishes and (ii) the discriminant of the characteristic polynomial is negative. The characteristic polynomial is
	\begin{align*}
		P(X) &= X^2 - \mathrm{Tr}\!\left(D_{y}F\right) X  + \det\!\left(D_{y}F\right).
	\end{align*}
	with discriminant
    \begin{align}\label{eq:discriminant_characteristic_polynomial}
		\Delta_{P} &= \mathrm{Tr}\!\left(D_{y}F\right)^2 -4\det\!\left(D_{y}F\right).
	\end{align}
	Setting $\mathrm{Tr}(D_yF)=0$ again yields
	\begin{equation*}
		\gamma = \tau((1-\beta) s'(p)  - 1),
	\end{equation*}
	Moreover, to ensure that the eigenvalues are non-real, we require $\Delta_P<0$, which is equivalent to $\det(D_yF)>0$. This inequality leads to
	\begin{equation*}
		\gamma\left(\left(1-\beta\right) s'(p)  - 1\right) + u'(p)  r'(e)\beta <0.
	\end{equation*}
    
	In light of our previous derivations, this condition is satisfied for $\beta \in (\beta_{-},\min\{\beta_{+},1\})$, with the additional constraint that $ 1 < (1-\beta)s'(p) < 1 + 1 / \tau$. Moreover, the eigenvalues are given by $\pm i\omega_0$, where $\omega_0 = \sqrt{\det\!\left(D_yF(y,\beta,\gamma)\right)}$.
\end{proof}

This proposition provides a complete characterization of the singular point of the Jacobian matrix of the FSOE dynamics \eqref{eq:FSO_dynamics}. 

\begin{remark}
    It is noteworthy that the condition $s'(p) > 1$ is necessary for the Jacobian of the FSOE dynamics at $y=(p,e)$ to become singular. In other words, at the opinion state $p$, the signal function $s$ must act as an amplifier of the agents' opinions for it to be singular.
\end{remark}

\subsection{Bifurcation analysis}

The following result provides conditions for a pitchfork bifurcation at the equilibrium $(p,e)=(0,0)$ of \eqref{eq:FSO_dynamics}-\eqref{eq:E_dynamics}. 

\begin{theorem}\label{prop:bifurcation}
	Suppose Assumption~\ref{assumption:odd_dynamics} holds and that $\beta^*$ and $\gamma^* = \gamma(\beta^*)$ satisfy \eqref{eq:zero_eigenvalue_condition_gamma} at $(p,e)=(0,0)$. 
    Then, the Jacobian of the FSOE dynamics~\eqref{eq:FSO_dynamics}-\eqref{eq:E_dynamics} at the equilibrium $y^* = (0,0) $ has a zero eigenvalue associated with the critical eingenvector
    \begin{equation*}
        	v = \begin{bmatrix} 1 & u'(0) / \gamma^* \end{bmatrix}^\top.
    \end{equation*}
    Moreover, if 
	\begin{equation*}
		c=(1-\beta^*)s'''(0)+\beta^*r'''(0)\frac{u'(0)^3}{{\gamma^*}^3} + \frac{\beta^*r'(0)}{\gamma^*} u'''(0)  
	\end{equation*}
    is nonzero, then a pitchfork bifurcation occurs at $y^*$ as $\beta$ passes through $\beta^*$. In particular, the bifurcating branches emerge along the subspace generated by $v$.

\end{theorem}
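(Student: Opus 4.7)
The plan is to apply the Lyapunov--Schmidt reduction mentioned in Section~\ref{sec:preliminary_results} and combine it with the $\mathbb{Z}_2$-equivariance $F(-y,\beta)=-F(y,\beta)$ granted by Assumption~\ref{assumption:odd_dynamics}. Under that symmetry every even-order Taylor coefficient of $F$ at the origin vanishes, so the reduced equation will automatically be odd in the amplitude and take the canonical pitchfork form $h(\alpha,\beta)=a(\beta)\alpha+b(\beta)\alpha^3+O(\alpha^5)$. The whole argument therefore reduces to (i) identifying the critical right and left null vectors, (ii) computing $b(\beta^*)$ and matching it with $c$, and (iii) checking the transversal crossing $a'(\beta^*)\neq 0$.

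I would first verify the kernel statement. By construction, \eqref{eq:zero_eigenvalue_condition_gamma} amounts to $\det(D_yF)=0$ at $(0,0,\beta^*)$, so $D_yF$ has a one-dimensional kernel. Writing $D_yF=\mathrm{diag}(\tau_x^{-1},\tau_e^{-1})\tilde J$ with $\tilde J$ having rows $((1-\beta^*)s'(0)-1,\ \beta^* r'(0))$ and $(u'(0),\,-\gamma^*)$, the second row of $\tilde J v=0$ reads $u'(0)-\gamma^* v_2=0$ and is solved by $v=[1,\ u'(0)/\gamma^*]^{\top}$, as claimed. A dual computation on the columns of $\tilde J$ yields the left null vector $v^{*}=[1,\ (\tau_e/\tau_x)\,\beta^* r'(0)/\gamma^*]^{\top}$.

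Next, I would carry out the Lyapunov--Schmidt reduction around $(y,\beta)=(0,\beta^*)$: write $y=\alpha v+w$ with $w$ in a complement of $\mathrm{span}(v)$, solve for $w=W(\alpha,\beta)$ from the range component of $F=0$ via the implicit function theorem, and define the scalar bifurcation equation $h(\alpha,\beta)=v^{*\top}F(\alpha v+W(\alpha,\beta),\beta)$. Oddness forces $W(\alpha,\beta)=O(\alpha^3)$, so the contribution of $W$ to the cubic coefficient is $O(\alpha^5)$ and the standard formula
\begin{equation*}
b(\beta^*)=\tfrac{1}{6}\,v^{*\top}D_y^{3}F(0,\beta^*)[v,v,v]
\end{equation*}
applies. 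Since $s,r,u$ depend on separate variables, only pure third partial derivatives survive, and after cancelling a global $\tau_x^{-1}$ factor one obtains $6\tau_x\,b(\beta^*)=c$. The hypothesis $c\neq 0$ is therefore exactly the pitchfork non-degeneracy condition $b(\beta^*)\neq 0$.

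It remains to verify the transversal eigenvalue crossing as $\beta$ varies past $\beta^*$ with $\gamma$ held at $\gamma^*$. First-order perturbation of the simple zero eigenvalue gives $a'(\beta^*)$ proportional to $r'(0)u'(0)/\gamma^*-s'(0)$, which, after substituting \eqref{eq:zero_eigenvalue_condition_gamma}, vanishes only when $s'(0)=1$; but the conditions of Proposition~\ref{prop:singular_points} impose $(1-\beta^*)s'(0)>1$, forcing $s'(0)>1$ strictly, so transversality is automatic. The $\mathbb{Z}_2$-equivariant pitchfork theorem then produces the two symmetric nontrivial branches emerging along $\mathrm{span}(v)$. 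The main obstacle I anticipate is the careful bookkeeping of the scalings $\tau_x^{-1},\tau_e^{-1}$, which enter both $v^{*}$ and $D_y^{3}F$ and whose cancellation is what makes the seemingly asymmetric third term $\beta^* r'(0)u'''(0)/\gamma^*$ appear in $c$ with the correct prefactor.
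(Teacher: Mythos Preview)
Your proposal is correct and follows essentially the same Lyapunov--Schmidt strategy as the paper: identify the null vectors of $D_yF$, exploit the $\mathbb{Z}_2$-symmetry to kill all even-order terms in the reduced scalar equation, and read off the cubic coefficient as $v^{*\top}D_y^3F(0,\beta^*,\gamma^*)[v,v,v]$. Your treatment is in fact slightly more careful than the paper's on two points: you keep the $\tau_e/\tau_x$ factor in the left null vector (which is what makes the third term of $c$ come out exactly as stated), and you explicitly verify the transversality condition $a'(\beta^*)\propto (1-s'(0))/\beta^*\neq 0$, whereas the paper only computes $w^\top\partial_\beta F=0$ (the coefficient of $\mu$, not of $\mu z$) and does not check the eigenvalue crossing.
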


\begin{proof}
From proposition \ref{prop:singular_points},  when $\beta^*$ verifies \eqref{eq:zero_eigenvalue_condition_gamma}, we know that the Jacobian of the FSOE dynamics \eqref{eq:FSO_dynamics}-\eqref{eq:E_dynamics} at the equilibrium $y^* = (0,0) $ is singular. Inspired by \cite{golubitskySingularitiesGroupsBifurcation1985}, we perform the Lyapunov-Schmidt reduction of the equilibrium equation $F(y, \beta^*,\gamma^*)=0$. At $(y^*, \beta^*,\gamma^*)$, the linearization $D_{y}F$ is singular and has a one-dimensional kernel. Let $v$ and $w$ be the corresponding right and left eigenvectors associated with the zero eigenvalues. One has 
	\begin{equation*}
		v = \begin{bmatrix}
			1 & u'(0)/\gamma^*
		\end{bmatrix}^{\top} \quad \text{and} \quad w = \begin{bmatrix}
			1 & \beta^* r'(0)/\gamma^*
		\end{bmatrix}.
	\end{equation*}
	Then, $P = w w^{\top}/ \norm{w}$ is the projection onto the kernel of $D_{y}F(y^*, \beta^*,\gamma^*)$ and $Q = I - P$ is the projection onto its range. One can write a general nearby solution as $y = z  v + \xi w$,
	where $z  \in \R$ the coordinate along the kernel and $\xi \in \R$ the coordinate along the range. 
	Let $\mu = \beta - \beta^*$ denote the unfolding parameter. The equilibrium equation $F(y, \beta^*,\gamma^*)=0$ is then equivalent to the system

	$P F(y, \beta^* + \mu, \gamma^*) = 0$ and $Q F(z  v + \xi w, \beta^* + \mu, \gamma^*) = 0$.

	By construction, one has $Q F(y^*,\beta^*,\gamma^*) = 0$. Since $Q D_{y}F(y^*,\beta^*,\gamma^*)$ is invertible, the implicit function theorem ensures that there exists a unique function $h(z ,\mu)$ such that $Q F(z  v + h(z ,\mu) w, \beta^* + \mu, \gamma^*) = 0$ and $h(0,0) = 0$. Then, a nearby solution to the equilibrium equation is given by $y=z  v+h(z ,\mu)$.
	
	Substituting this ansatz into $F(y, \beta^* + \mu, \gamma^*)=0$ and projecting onto the kernel (by multiplying on the left by $w^{\top}$) yields the reduced bifurcation equation $\Phi(z ,\mu) = 0$.

	Since the functions $s$, $r$, and $u$ are odd, their Taylor expansions about the origin contain only odd-order terms. This symmetry ensures that no quadratic term appears in the expansion of $\Phi$ in $z $. Thus, expanding in powers of $z $ and $\mu$ we obtain
	\begin{equation*}
		\Phi(z ,\mu)=a\mu z +\frac{1}{6}c z ^3+\mathcal{O}(z^5,\mu z^3,\mu^2 z )=0,
	\end{equation*}

	with some constant 
	\begin{equation*}
		a = w^\top \partial_{\beta} F(y^*,\beta^*,\gamma^*) = w^\top \begin{bmatrix}
			\frac{r(0) - s(0)}{\tau_x} & 0 
		\end{bmatrix}^\top = 0,
	\end{equation*}
	since $r(0) = s(0) = 0$.
	The third-order terms are given by:
	\begin{equation*}
		D^3_{y}F(y^*\!\!, \beta^*\!\!, \gamma^*)(h,k,l) \! = \! \begin{bmatrix}
			\tau_x^{-1}  \!\!\!\!& 0\\
			0 \!\!\!\!& \tau_e^{-1}
		\end{bmatrix} \!\! \begin{bmatrix}
			\begin{split}
				&(1-\beta^*) s'''(0) h_1 k_1 l_1  \\
				&\,\, \,  \quad+ \beta^* r'''(0) h_2 k_2 l_2 
			\end{split}	\\
			u'''(0) h_1 k_1 l_1 
		\end{bmatrix} \!\!
	\end{equation*}
	for $y = (p, e)$, $h = (h_1, h_2)$, $k = (k_1, k_2)$ and $l = (l_1, l_2)$. 

	Then, the cubic coefficient $c$ is given by
	\begin{align*}
		&c=w^\top \left[D^3_{y}F(y^*, \beta^*,\gamma^*)(v,v,v)\right]\\
	&=\frac{(1-\beta^*)s'''(0)}{\tau_x}+\frac{\beta^*r'''(0)u'(0)^3}{\tau_x {\gamma^*}^3} + \frac{\beta^*r'(0)u'''(0)}{\tau_e \gamma^*}.
	\end{align*}

Consequently, if $c\neq 0$ a pitchfork bifurcation occurs at $y^*$ as $\beta=\beta^*$. Precisely, two nontrivial symmetric solutions bifurcate from $z =0$, along the $v$ direction, when the sign of $\mu$ changes.
\end{proof}

From an environmental view-point,  this pitchfork bifurcation captures the idea that small changes in the trust parameter $\beta$ can trigger sudden transitions between collective opinion states. Close to a pitchfork bifurcation, dynamics are bistable: the state can converge to either of two stable equilibria, depending on perturbations or initial conditions. In the environmental setting, either the collective opinion strongly favors pro-environmental actions, resulting in a well-preserved environment, or the prevailing sentiment opposes environmental efforts, leading to environmental degradation.

\begin{remark}
Although the pitchfork of Theorem~\ref{prop:bifurcation} happens at an unstable trivial equilibrium, it is this bifurcation that gives rise to bistability in the FSOE dynamics. 
\end{remark} 

In addition to the pitchfork bifurcation, the trivial equilibrium also exhibits a Hopf bifurcation for a different range of parameter $\beta$.

\begin{theorem}\label{prop:hopf}
Suppose Assumption~\ref{assumption:odd_dynamics} holds and let $\beta^*$ and $\gamma^*= \gamma(\beta^*)$ satisfy \eqref{eq:hopf_condition} with equilibrium $y^* =(0,0)$. Then, a Hopf bifurcation yielding a family of periodic orbits occurs at $y^*$ as $\beta=\beta^*$. Moreover, the bifurcating limit cycle is unique and stable for $\beta < \beta^*$ if the coefficient $h_{21}$ defined in \eqref{h21} satisfies $\mathrm{Re}(h_{21})\neq 0$.	

\end{theorem}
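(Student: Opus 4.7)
The plan is to verify the three hypotheses of the standard Hopf bifurcation theorem (as in Kuznetsov's formulation): (i) a simple pair of purely imaginary eigenvalues of the Jacobian at the bifurcation point, (ii) transversality of these eigenvalues through the imaginary axis, and (iii) non-degeneracy of the first Lyapunov coefficient, which will turn out to coincide with $\mathrm{Re}(h_{21})$ up to a positive factor.

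Step (i) is immediate from part~3 of Proposition~\ref{prop:singular_points}: the Hopf condition \eqref{eq:hopf_condition} places the eigenvalues of $D_y F(0,\beta^*,\gamma^*)$ at $\pm i\omega_0$ with $\omega_0=\sqrt{\det(D_yF)}$. I would then exhibit the complex right eigenvector $q$ with $D_yF\,q=i\omega_0 q$ and the left eigenvector $p$ with $p^\top D_yF = i\omega_0 p^\top$ explicitly in terms of $s'(0),\, r'(0),\, u'(0),\, \gamma^*,\, \tau_x,\, \tau_e$, and normalize them so that $p^\top q = 1$.

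For step (ii), I would differentiate the trace $\mathrm{Tr}(D_yF(0,\beta,\gamma^*)) = \frac{(1-\beta)s'(0)-1}{\tau_x} - \frac{\gamma^*}{\tau_e}$ with respect to $\beta$, which gives $-s'(0)/\tau_x$. The condition $1<(1-\beta^*)s'(0)<1+1/\tau$ from Proposition~\ref{prop:singular_points} forces $s'(0)>1$, so the real part $\alpha(\beta)$ of the complex conjugate pair crosses zero with slope $\alpha'(\beta^*) = -s'(0)/(2\tau_x)<0$. Transversality therefore holds, and the eigenvalues move from the right into the left half-plane as $\beta$ increases through $\beta^*$; equivalently the origin is unstable for $\beta<\beta^*$ and asymptotically stable for $\beta>\beta^*$.

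For step (iii), I would perform a center-manifold reduction at $(0,\beta^*,\gamma^*)$ and put the planar flow into the normal form $\dot z = i\omega_0 z + \sum_{j+k\ge 2} g_{jk}\, z^j\bar z^k/(j!k!)$. Here the crucial simplification comes from Assumption~\ref{assumption:odd_dynamics}: the oddness of $F$ kills every coefficient of $D^2_y F(0)$, so the standard contributions of the form $\langle p, B(q,(A)^{-1}B(q,\bar q))\rangle$ and $\langle p, B(\bar q,(2i\omega_0 I-A)^{-1}B(q,q))\rangle$ that normally enter the first Lyapunov coefficient identically vanish. The coefficient then reduces to the purely cubic expression $h_{21} = p^\top D^3_y F(0,\beta^*,\gamma^*)(q,q,\bar q)$, and the first Lyapunov coefficient becomes $\ell_1 = \mathrm{Re}(h_{21})/(2\omega_0)$. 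Non-degeneracy $\mathrm{Re}(h_{21}) \neq 0$ then yields the uniqueness of the bifurcating limit cycle via the Hopf theorem, and combining the sign of $\mathrm{Re}(h_{21})$ with $\alpha'(\beta^*)<0$ fixes the branch $\beta<\beta^*$ as the one on which the (stable) cycle exists.

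The main obstacle is the explicit evaluation of $h_{21}$: the three-linear form $D^3_yF(q,q,\bar q)$ combines the three cubic curvatures $s'''(0)$, $r'''(0)$, $u'''(0)$ through complex contractions of the components of $q$, and then has to be projected along $p$; keeping track of the real and imaginary parts carefully is what is needed to obtain a sign for $\mathrm{Re}(h_{21})$ and to conclude on the direction of the bifurcation. The existence and transversality pieces are, by contrast, direct consequences of the linear analysis already carried out in Proposition~\ref{prop:singular_points}.
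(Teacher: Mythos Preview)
Your proposal is correct and follows essentially the same route as the paper: both invoke Proposition~\ref{prop:singular_points} for the purely imaginary pair, differentiate the trace to obtain $\alpha'(\beta^*)=-s'(0)/(2\tau_x)\neq 0$ for transversality, and then exploit the odd symmetry of Assumption~\ref{assumption:odd_dynamics} to kill all quadratic contributions so that the first Lyapunov coefficient reduces to a multiple of $\mathrm{Re}(h_{21})$ with $h_{21}=p^\top D^3_yF(0,\beta^*,\gamma^*)(q,q,\bar q)$. The only cosmetic difference is that you speak of a center-manifold reduction, which is vacuous here since the FSOE system is already planar; the paper simply works directly with the complex coordinate $z$ on $\mathbb{R}^2$.
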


\begin{proof}
	By \cite[Theorem 3.4.2]{guckenheimerNonlinearOscillationsDynamical1983}, the existence of periodic orbits follows if the system \eqref{eq:FSO_dynamics}-\eqref{eq:E_dynamics} satisfies two conditions at the equilibrium $y^* = (0,0)$: the eigenvalues of the Jacobian matrix $D_{y}F(y^*,\beta^*,\gamma^*)$ are purely imaginary, and the derivative of the real part of the eigenvalues with respect to $\beta$ is nonzero. 

	First, by Proposition \ref{prop:singular_points}, the eigenvalues of the Jacobian matrix at $y^* = (0,0)$ are purely imaginary, satisfying the first condition. Moreover, in a neighborhood of $(y^*,\beta^*,\gamma^*)$, the characteristic equation of the Jacobian $D_y F$ has discriminant \eqref{eq:discriminant_characteristic_polynomial} with $\Delta_{P} < 0$, giving eigenvalues
	\begin{equation*}
		\lambda = \frac{\mathrm{Tr}\!\left(D_{y}F(y,\beta, \gamma^*)\right) + i\sqrt{- \Delta_P(y,\beta, \gamma^*)}}{2},
	\end{equation*}
	with $\Delta_P$ given by \eqref{eq:discriminant_characteristic_polynomial}. 

	The second condition is also satisfied since the real part of these eigenvalues is $\mathrm{Re}(\lambda) = \mathrm{Tr}(D_{y}F(y,\beta, \gamma^*)) / 2$. Differentiating with respect to $\beta$, we obtain $\partial_\beta\mathrm{Re}(\lambda) = -s'(0) / \tau_x$, which is nonzero since $s'(0) > 1$. This proves the existence of the periodic orbits.

    The stability and uniqueness of the limit cycle results from \cite[Theorem 3.3]{kuznetsovElementsAppliedBifurcation}. First, let us provide the normal form of the bifurcation. At $(y^*, \beta^*,\gamma^*)$, the linearization $D_{y}F$ is singular and has a two-dimensional center subspace associated with the purely imaginary eigenvalues $\pm i \omega_0$. Let $q$ and $\bar{q}$ be the right eigenvectors and $p$ and $\bar{p}$ be the left eigenvectors associated with the eigenvalues $i\omega_0$ and $-i\omega_0$, respectively. They are given by
	\begin{equation*}
		q = \begin{bmatrix}
			\tau_x^{-1} \beta^* r'(0) \\
			 i \omega_0 - a
		\end{bmatrix} \quad \text{and} \quad p = 
        \begin{bmatrix}
			\tau_e^{-1} u'(0) & i \omega_0 - a
		\end{bmatrix},
	\end{equation*}
	where $a = \tau_x^{-1} ((1-\beta^*)s'(0) - 1) = \tau_x^{-1} \gamma^*$. 

	By \cite[Lemma 3.3]{kuznetsovElementsAppliedBifurcation}, for $\beta$ sufficiently close to $\beta^*$ and setting $\mu = \beta - \beta^*$, the FSOE dynamics \eqref{eq:FSO_dynamics}-\eqref{eq:E_dynamics} can be transformed via a complex variable $z$ into
	\begin{equation}\label{eq:expression_z_original}
		\dot{z} = \lambda z + h(z, \bar{z}, \mu),
	\end{equation}
	with $h(z, \bar{z}, \mu) = \mathcal{O}(\abs{z}^2)$ a smooth function of $z$ and $\bar{z}$.

    Expanding $h$ in powers of $z$ and $\bar{z}$,
	\[ h(z,\bar{z},\mu) = \sum_{\mathclap{k+l \ge 2\,}} \frac{1}{k!l!} \frac{\partial^{k+l}}{\partial z^k \,\partial \bar{z}^l} \left\langle p, F\bigl(zq + \bar{z}\bar{q}, \mu\bigr)	\right\rangle_{{|{z=0}}}  z^k \bar{z}^l. \]

    By \cite[Lemma 3.6]{kuznetsovElementsAppliedBifurcation}, equation \eqref{eq:expression_z_original} can be rewritten as $\dot{z} = \lambda z + c_1 z^2 \bar{z} + \mathcal{O}(\abs{z}^4)$, where $c_1$ is the first Lyapunov exponent given by
	\begin{equation*}
		c_1 = \frac{h_{20}h_{11} (2\lambda + \bar{\lambda})}{2\abs{\lambda}} + \frac{\abs{h_{11}}^2}{2} + \frac{\abs{h_{02}}^2}{2(2\lambda - \bar{\lambda})}  + \frac{h_{21}}{2}.
	\end{equation*}
	
Since $s$, $r$ and $u$ are odd functions, one has that $h_{kl} = 0$ for $k + l$ even, leading to $c_1 = h_{21}/2$.
The coefficient $h_{21}$ is given by
	\begin{align}\label{h21}
		h_{21} &= \bar{p}^\top D^3_{y}F(y^*, \beta^*,\gamma^*)(q,q,\bar{q}) \notag\\
		&= u'(0)\biggl[ \frac{(1-\beta^*) s'''(0) \left(\beta^* r'(0)\right)^3}{\tau_x^3} + \\
			\beta^* r'''(0)&(i \omega_0 + a)(i \omega_0 - a)^2\biggr]\! - \frac{u'''(0)\! \left(i \omega_0 + a\right)\!\left(\beta^* r'(0)\right)^3 \!\!}{\tau_x^2} ,\notag
	\end{align}
with $a=\tau_x^{-1}\gamma^*$ and $\omega_0=\sqrt{\det\!\left(D_y F(y^*,\beta^*,\gamma^*)\right)}$.
If $\mathrm{Re}(h_{21}) \neq 0$ one has $\mathrm{Re}(c_{1})\neq 0$ and from \cite[Theorem 3.3]{kuznetsovElementsAppliedBifurcation} there is a unique stable limit cycle that bifurcates from the equilibrium $y^*$ via a Hopf bifurcation for $\beta < \beta^*$.
\end{proof}

In summary, our bifurcation analysis reveals two distinct types of qualitative transitions in the coupled opinion-environment dynamics. The pitchfork bifurcation indicates a sudden, symmetry-breaking shift in collective behavior, whereas the Hopf bifurcation signals the emergence of oscillatory dynamics, which may model the recurrent cycles of environmental collapse and recovery observed in prey-predatory systems.

\section{NUMERICAL SIMULATIONS} \label{sec:numerical_simulations}

We illustrate the results of Theorem~\ref{prop:bifurcation} and \ref{prop:hopf} with numerical simulations. We consider the following functions $s(x) = \tanh(3x)$, $r(x) = \tanh(-3x)$ and $u(x) = x -\gamma \bar{e}$, where $\bar{e} = 0.5$ is the environmental threshold.
We set $\tau_x = \tau_e = 1$ and $\gamma = 0.2$. We compute the bifurcation diagram for the FSO dynamics \eqref{eq:FSO_dynamics} with respect to the parameter $\beta$ in Figure~\ref{fig:bifurcation_diagram}. We observe a pitchfork bifurcation around $\beta = 0.24$ and a Hopf bifurcation around $\beta = 0.60$ in Figures \ref{fig:first_bifurcation} and \ref{fig:hopf_bifurcation}, respectively.

\begin{figure}[t]
    \centering
    \includegraphics[width=\linewidth]{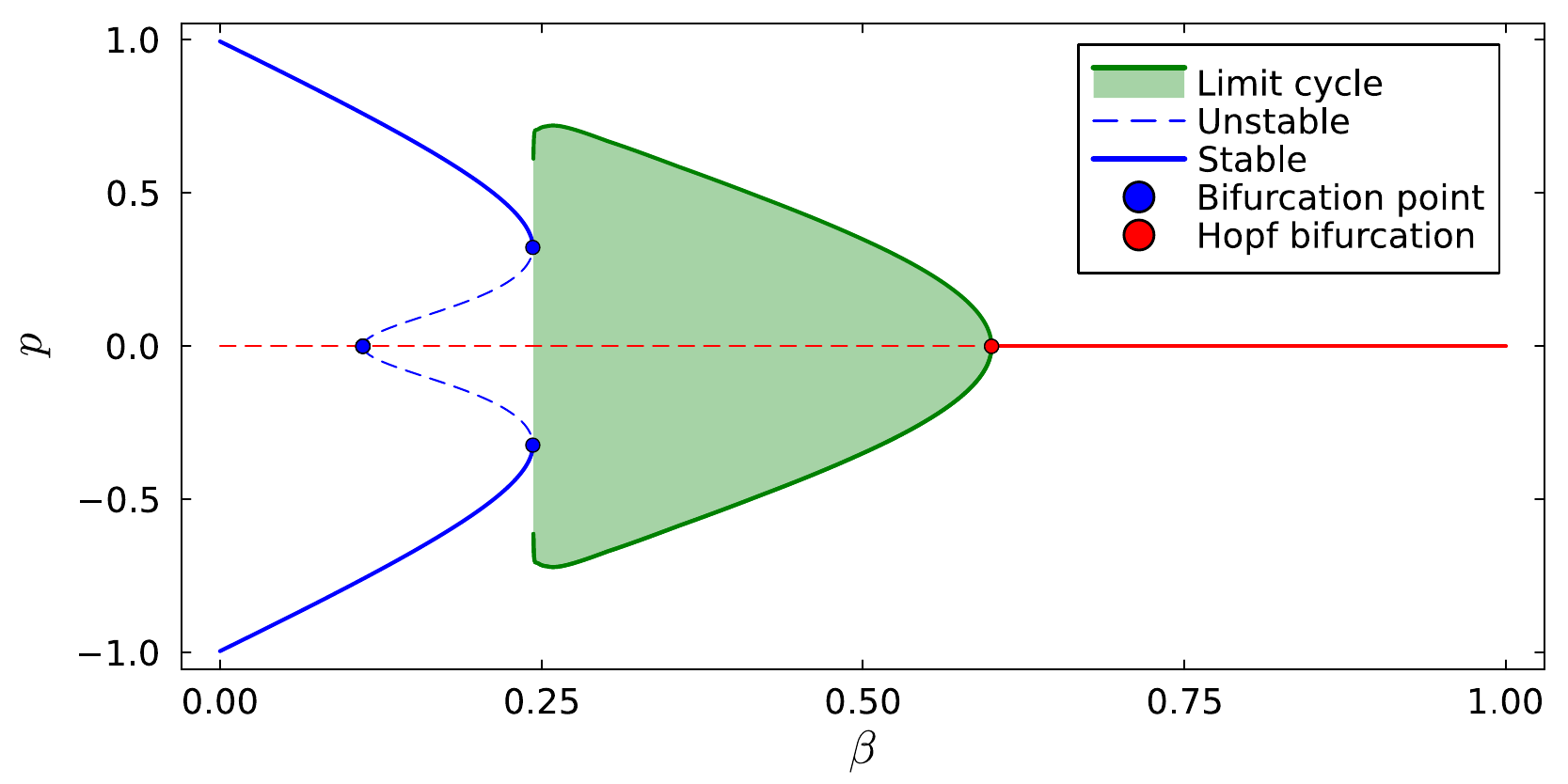}
    \caption{Bifurcation diagram for the FSO dynamics \eqref{eq:FSO_dynamics} with $\gamma = 0.2$.  
    Solid lines denote stable equilibria, dashed lines denote unstable equilibria, dots mark bifurcation points, and the green region shows the amplitude of the stable limit cycle emerging from the Hopf bifurcation.}
    \label{fig:bifurcation_diagram}
    \vspace{-0.3cm}
\end{figure}

\begin{figure}[t]
    \centering
    \begin{subfigure}[t]{0.49\linewidth}
        \includegraphics[width=\linewidth]{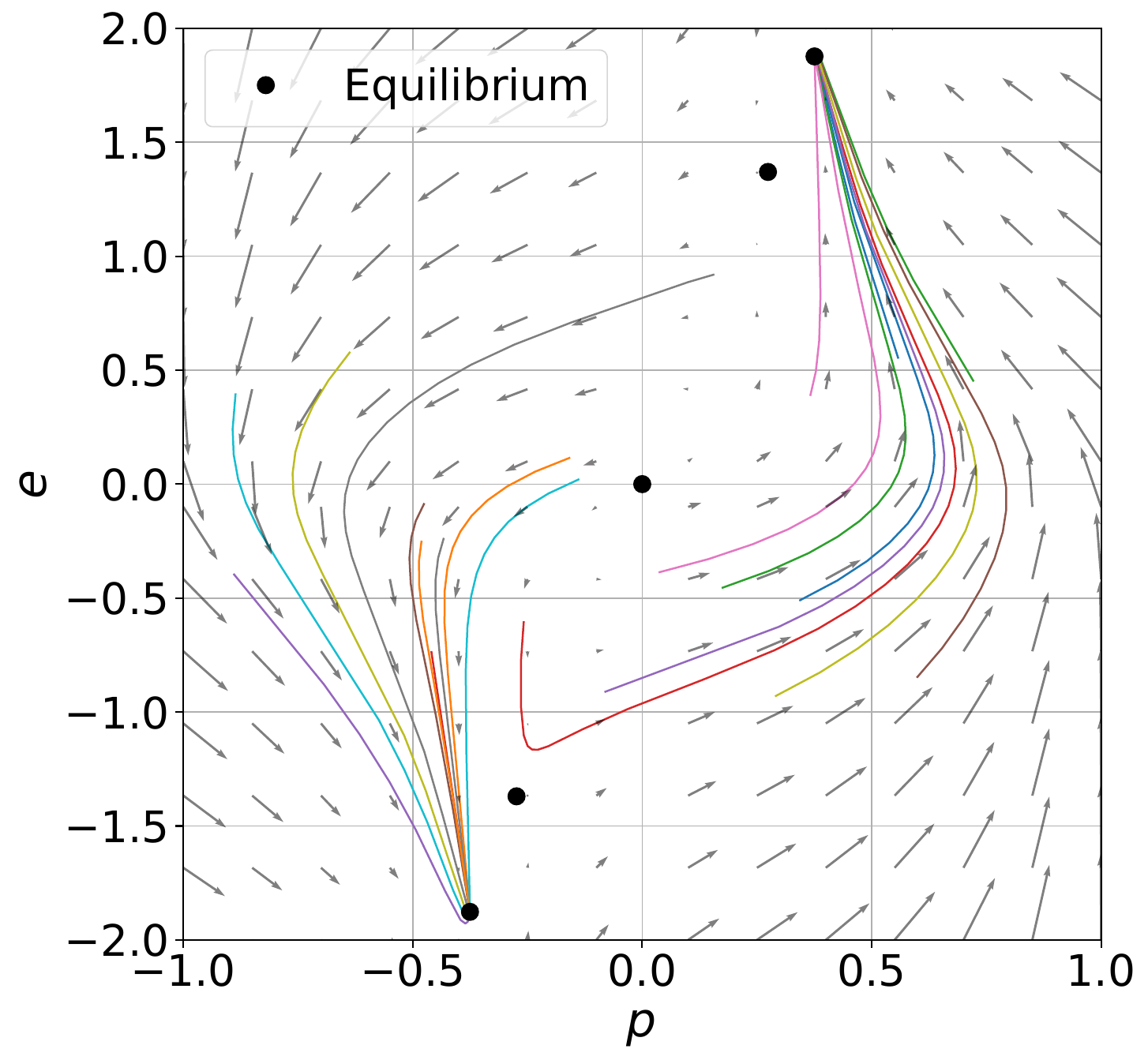}
        \caption{$\beta = 0.24$}
    \end{subfigure}
    \hfil
    \begin{subfigure}[t]{0.49\linewidth}
        \includegraphics[width=\linewidth]{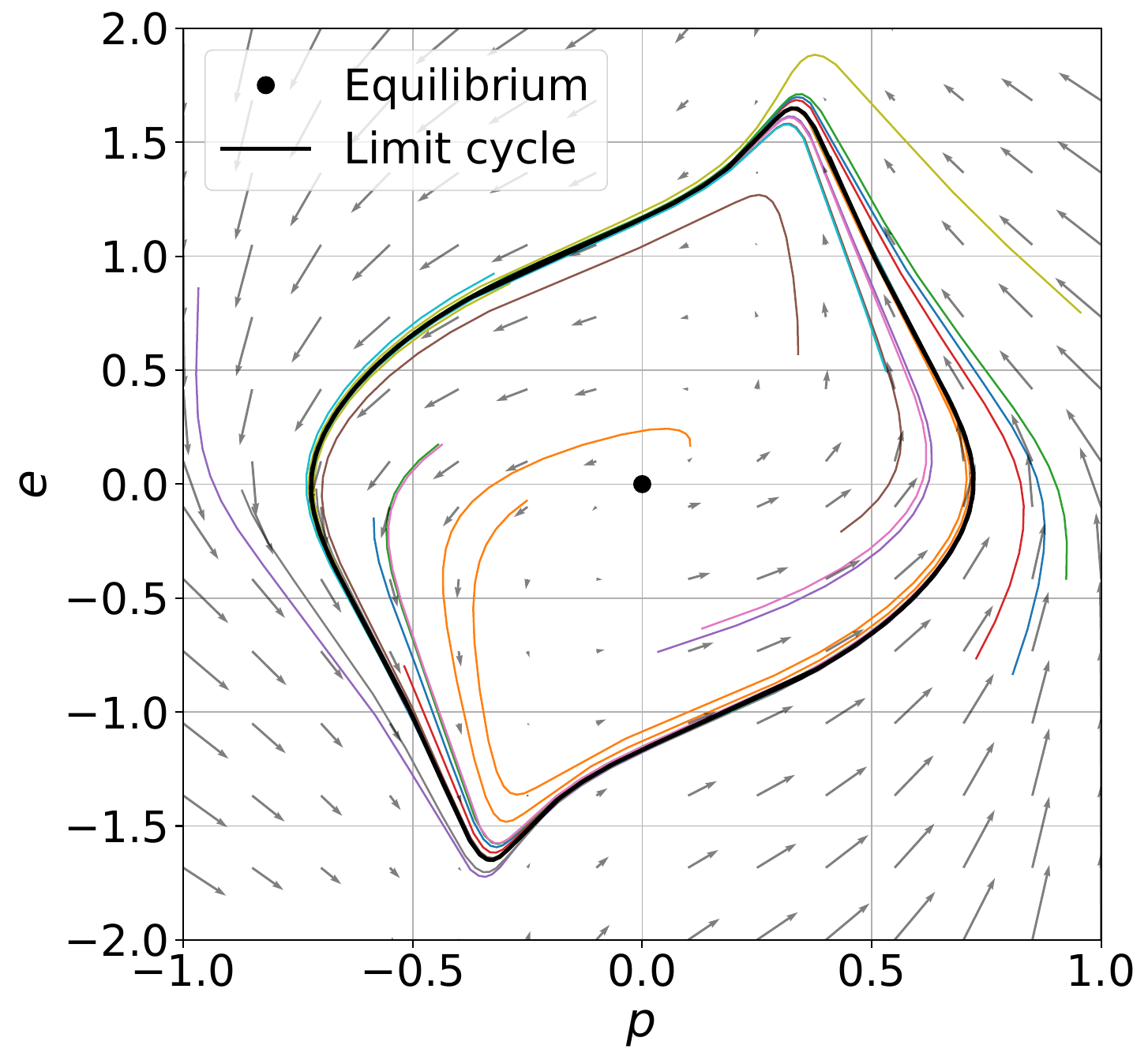}
        \caption{$\beta = 0.25$}
    \end{subfigure}
    \caption{Phase portraits for two value of $\beta$ and $\gamma = 0.2$. The system passes from 5 equilibria for $\beta = 0.24$ to 1 equilibrium with a stable limit cycle for $\beta =0.25$.}
    \label{fig:first_bifurcation}

    \vspace{-0.3cm}
\end{figure}

\begin{figure}[t]
    \centering
    \begin{subfigure}[t]{0.49\linewidth}
        \includegraphics[width=\linewidth]{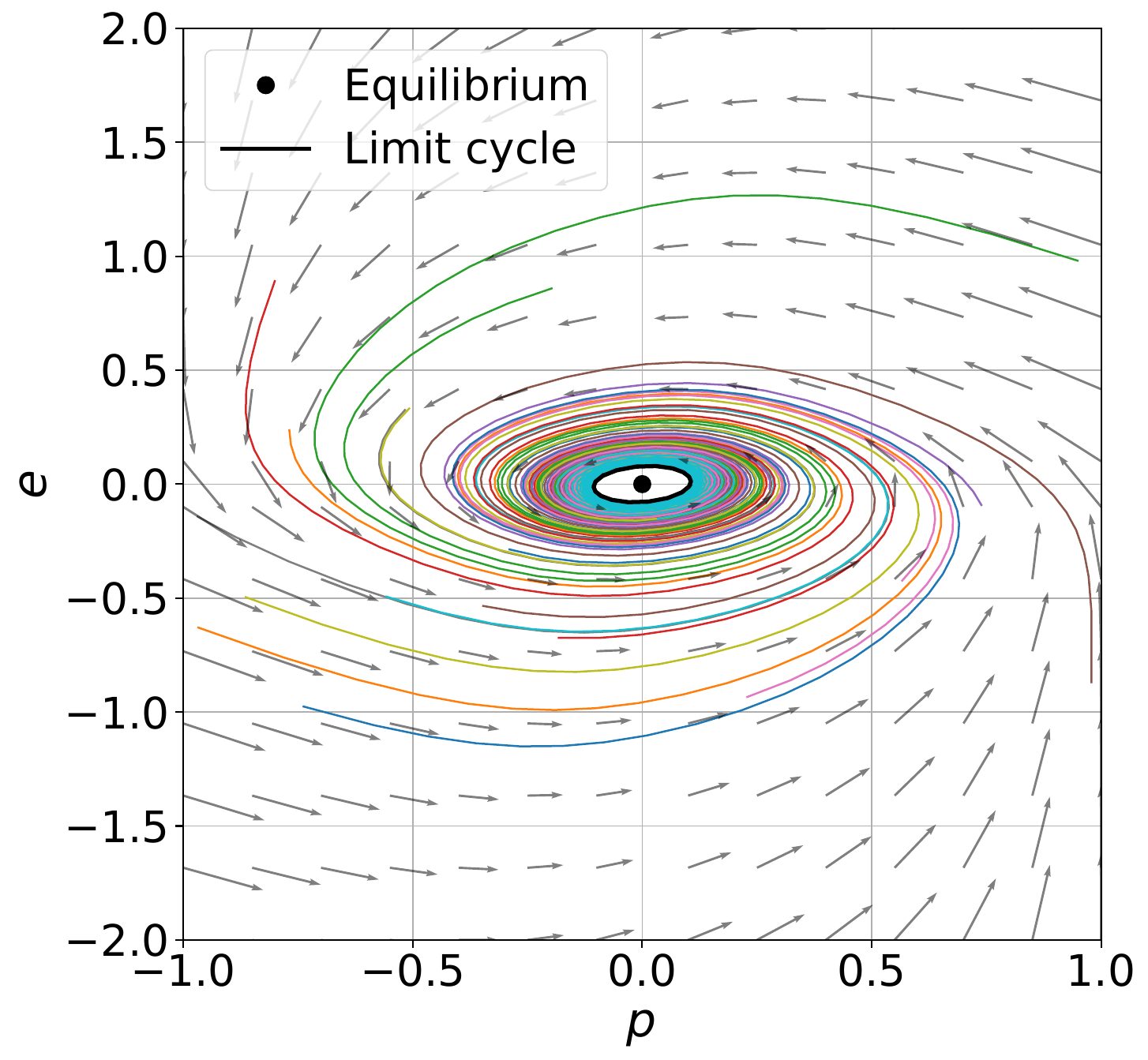}
        \caption{$\beta = 0.59$}
    \end{subfigure}
    \hfil
    \begin{subfigure}[t]{0.49\linewidth}
        \includegraphics[width=\linewidth]{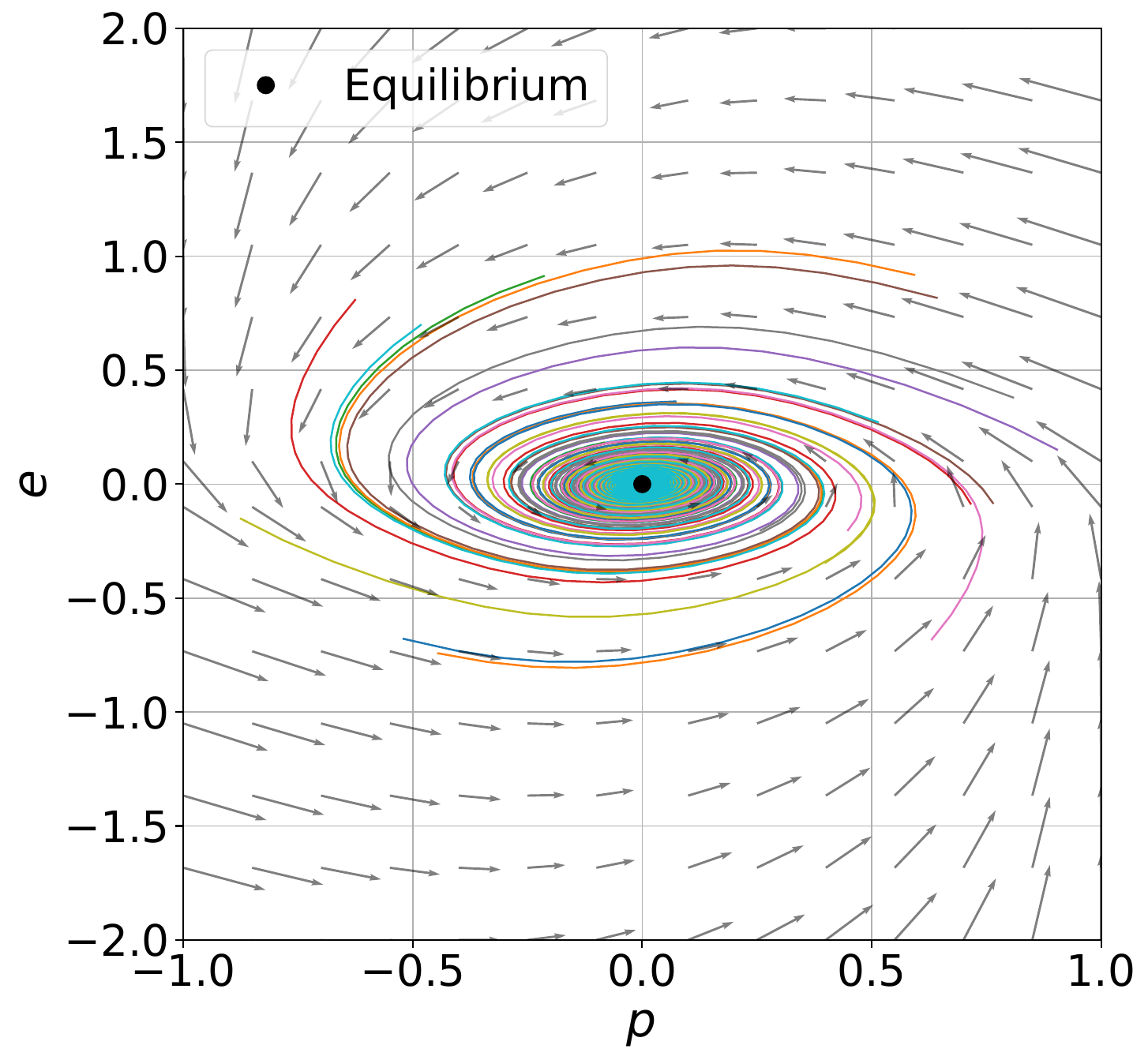}
        \caption{$\beta = 0.61$}
    \end{subfigure}
    \caption{Phase portraits for two value of $\beta$ and $\gamma = 0.2$. The stable limit cycle for $\beta=0.59$ collapses into an equilibrium at $\beta = 0.61$ illustrating the Hopf bifurcation of Figure~\ref{fig:bifurcation_diagram}.}
    \label{fig:hopf_bifurcation}
    \vspace{-0.3cm}
\end{figure}

\begin{remark}
    As illustrated in Figure~\ref{fig:bifurcation_diagram}, the system also exhibits saddle-node bifurcation involving the equilibria that emerge from the pitchfork bifurcation of Theorem~\ref{prop:bifurcation}. Due to lack of space,  we omit a detailed analysis of these saddle-node bifurcations; such an analysis could be carried out using techniques analogous to those in Theorem~\ref{prop:bifurcation}.
\end{remark}

\section{CONCLUSIONS} \label{sec:conclusions}

In this paper, we introduce and analyze a continuous-time opinion-environment model as an extension of \cite{couthuresAnalysisOpinionDynamics2024}, capturing the interplay between social interactions and environmental feedback. We establish positive invariance, explore singularities of the FSOE dynamics, and demonstrate pitchfork and Hopf bifurcations using a rigorous mathematical framework.

\addtolength{\textheight}{-12cm}   





\bibliographystyle{IEEEtran}
\bibliography{Biffurcation_OD}

\end{document}